\newcommand{\R}{\mathbb{R}}
\begin{document}
\title{\textbf{
 Novel Corona virus Disease  infection in Tunisia: Mathematical model and the impact of the quarantine strategy}}

\author{ Haifa Ben Fredj \and   Farouk Chérif}
\institute{ Haifa Ben Fredj \at
             MaPSFA, LR11ES35, ESSTHS, University of Sousse-Tunisia    \\   
                 Tel.: +21650970204\\      
              \email{haifabenfredjh@gmail.com}  
           \and
           Farouk Chérif \at
          ISSATs and Laboratory of Mathematical Physic, Specials Functions and Applications (MaPSFA), LR11ES35, ESSTHS, University of  Sousse-Tunisia \\
              Tel.: +21629307607\\
              \email{faroukcheriff@yahoo.fr}           
}

\date{Received: date / Accepted: date}
% The correct dates will be entered by the editor

\maketitle

\begin{abstract}\item
In this paper, we propose a new model for the dynamics of COVID-19
infections. Our approach consists of seven phenotypes: the susceptible
humans, exposed humans, infectious humans, the recovered humans, the
quarantine population, the recovered-exposed and deceased population.\\
We proved first through mathematical approach the positivity, boundness and
existence of a solution to considered model.\\
We also studied  the existence of the disease
free equilibrium and corresponding stability. Hence, the actual
reproduction number was calculated .\\
We analyzed the dependence of basic reproductions $R_0$ on the confidence of
our model. Our work shows, in particular, that the disease will decrease out
if the number of reproduction was less than one. Moreover, the impact of the
quarantine strategies to reduce the spread of this disease is discussed.\\
The theoretical results are validated by some numerical simulations of the
system of the epidemic's differential equations.

\keywords{  Mathematical modelling \and  Nonlinear differential systems \and Qualitative study \and Simulation     }
 \subclass{MSC 34A34 \and MSC 34C60 \and MSC 93A30 \and MSC 92Bxx}
\end{abstract}

\section{Introduction}
\label{sec:1}

Since the beginning of the  COVID-19 epidemic in Wuhan City on December 2019, the opinions of scientists, researchers and commen- tators contradict each other every day. On 7 January, the coronavirus disease (COVID-2019) which was named as a severe acute respiratory syndrome coronavirus 2 (SARS-CoV-2) by International Committee on Taxonomy of Viruses on 11 February, 2020, was identified as the causative virus by Chinese authorities \cite{Wo}, and has become a pandemic especially by travelers \cite{Wi}. This forced the World Health Organization (WHO) to consider the dramatic spread of the infection in March 2020 as a public health emergency of international concern.  This epidemic is characterized by its rapid spread and its symptoms do not appear quickly. In particular, accord- ing to the WHO, the incubation period is from 2 days to 14 days \cite{Wo}. In addition, there is no anti-viral treatment or vaccination officially approved for the management or prevention of this epidemic. In order to fight this outbreak, the public health decision and policy makers should decide and  follow strategic and health-care management. Since the declaration of the first case of  COVID-19 in Tunisia in early March 2020, thousands of screenings have been carried out in Tunisia. Following the first recommendations of the WHO Tunisia, as France, Italy or Spain, first opted  for rare and targeted screenings on suspicious people, their entourage and certain sources of contamination; in particular for people arriving from abroad. Next, If a person is positive, the authorities try to trace all the people with whom he has been in contact and they are called to place themselves in self-containment without necessarily being tested. They can only be tested if they themselves have symptoms of COVID-19.\\

It is clear that the low number of screenings, mainly on "suspect" cases or those presenting significant symptoms, does not give a precise idea of the number of people who could potentially be infected without knowing them. This gap between the day of infection and the day of diagnosis can have serious consequences on the spread of the epidemic. All this shows the complexity of the situation since the unknown things related to the epidemic are more important than the things we already know.\\

These are the daily question that must be answered: how many people exactly recover from COVID-19? How many people are infected from COVID-19? How many people died from COVID-19?\\

Recently, several mathematical models have been published in order to be able to study the dynamics and the evolution of this pandemic. One can refer to \cite{Fa, Ji, Ku, Pe, Ta1, Ta} and their references. It appears from clinical experiences and recent articles that knowing the data of infected people in the population would be very useful to have better models of when disease will peak and decline, and also when we can begin to let people go back to work. Also, knowing the real number of recovered people also could indicate how easily people can build immunity against the virus.\\

Motivated by the above discussion our method comprises seven phenotypes; the susceptible humans, exposed humans, infectious humans, the recovered humans, the quarantine population, the recovered-exposed and the dead population in order to improve and adapt the susceptible-infected-recovered (SIR ) model. In fact, we noticed that people previously infected and recovered from COVID-19 are generally excluded as susceptible individuals in the modeling, which would have an impact on predicting the number of cases that will occur in the near future. In our approach we chose to integrate them into the model as a recovered-exposed population and which we will note $ E_r$. In addition, at the individual level, if people can find out if they have been or slightly infected and cured, and if they show civility and respect the distancing, they can safely return to work once the  general quarantine is off.\\

Our main contributions in this paper are:\\
–	We gave a model of dynamical behavior of COVID-19 in Tunisia.\\
–	We studied the qualitative properties of our model.\\
–	We established the stability of equilibrium point via $R_0$.\\
–	We proved theoretically and by numerical simulations the effect of quarantine strategy.\\

The reminder of the paper is organized as follows. In Section 2, we present our model. Section 3 is devoted to the mathematical analysis: Boundness, positivity, and the equilibrium point. the dynamics of exposed and infected population are also studied in this section. In Section 4, via Matab, we perform numerical simulations of three types of populations. Section 5 concludes the paper with some recommendations.

\section{ Mathematical model}
\label{sec:2}

\quad The classical susceptible-infected-recovered (SIR) model in epidemiology 
\cite{He} allows the determination of critical condition of disease
development in the population irrespective of the total population size over a
short period of time. The SIR is considered from the following simplest ODE
system: 
\begin{equation}
\begin{array}{lll}
\dfrac{dS}{dt} & =-\dfrac{\lambda }{N}SI, &  \\ 

\dfrac{dI}{dt} & =\dfrac{\lambda }{N}SI-\beta I, &  \\ 

\dfrac{dR}{dt} & =\beta I. & 
\end{array}%
\end{equation}%
for the sizes of the susceptible sub-population S, infected I, and recovered 
$R$. The term $\dfrac{\lambda}{N} IS$ describes the disease transmission rate due to
the contacts between susceptible and infected individuals, $\beta I$
characterizes the rate of recovery of infected people. It is assumed here that the recovered individuals do not
return to susceptible class, that is, recovered individuals have immunity
against the disease; they cannot become infected again and cannot infect
susceptible either.\\

Day by day, we see that the specifics of this virus require more complex
models. Roughly speaking, it is necessary to question the duration of the
incubation, or the presence of people who do not have symptoms after their
infection, but who nevertheless participate in the spread of the virus and
consider compartments for these different cases. In our work, we have used
a 7- Phenotype model which takes into account people who are infected,
exposed, recovered, those susceptible, the deceased, Quarantine but also
recovered-exposed people who are not  counted by the government.

The interaction between the above sub-populations can be described by the
compartmental diagram in Figure 2.1. The parameters indicated in Figure 1
are described in Table 1. 
\begin{figure}[hptb]
\centering
\includegraphics[scale=0.6]{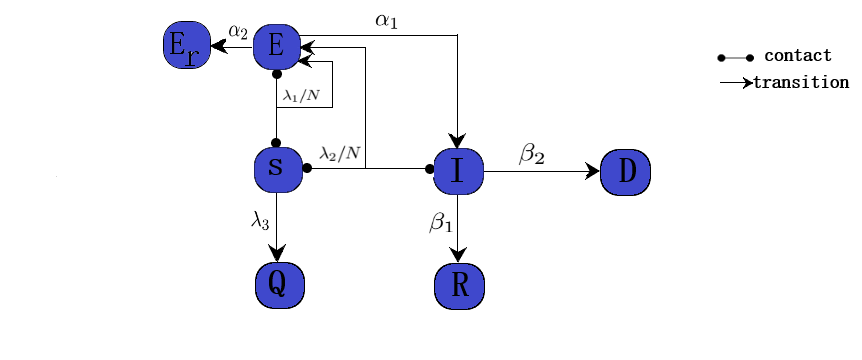}
\caption{ Model flow chart showing the compartments.}
\end{figure}

So the diagram above (figure 2.1) can be expressed as a system of non-linear differential equations (SEIRDQ) : 
\begin{equation}
\left\{ 
\begin{array}{lll}
\dfrac{dS}{dt} & =-\dfrac{\lambda_1}{N} S E-\dfrac{\lambda_2}{N} S
I-\lambda_3 S, &  \\ 
\\
\dfrac{dE}{dt} & = \dfrac{\lambda_1}{N} S E+\dfrac{\lambda_2}{N} S
I-(\alpha_1+\alpha_2) E,&  \\ 
\\
\dfrac{dI}{dt} & = \alpha_1 E-(\beta_1 +\beta_2)I, &  \\ 
\\
\dfrac{dE_r}{dt} & = \alpha_2 E, &  \\ 
\\
\dfrac{dR}{dt} & =\beta_1 I, &  \\ 
\\
\dfrac{dD}{dt} & =\beta_2 I, &  \\ 
\\
\dfrac{dQ}{dt} & =\lambda_3 S,& 
\end{array}
\right.
\end{equation}
\newpage
where
 \begin{table}[hbtp]
\centering
\begin{tabular}{|p{2cm}|p{4.6cm}|}
\hline
parameters & description \\ \hline
$\lambda_1$ & the contact rate with S and E \\ \hline
$\lambda_2$ & the contact rate with S and I \\ \hline
$\lambda_3$ & the home quarantine rate of $S$ \\ \hline
$N$ & \quad the total population \\ 
& $S+E+I+E_r+R+D+Q$ \\ \hline
$\alpha_1$ & the incubation rate \\ \hline
$\alpha_2$ & the recovered rate of $E$ \\ \hline
$\beta_1$ & the recovered rate of $I$ \\ \hline
$\beta_2$ & the death rate \\ \hline
\end{tabular}%
\caption{Parameters and their description}
\end{table}
\section{ Mathematical Analysis}

\subsection{Basic Properties of the Model}

Obviously, the system (2.2) can be written as follows 
\begin{equation}
X^{\prime}(t)=A X(t)+f(X),
\end{equation}

where 
\begin{equation*}
\begin{array}{lll}
& X(t)=%
\begin{pmatrix}
S \quad E \quad I \quad E_r \quad R \quad D \quad Q%
\end{pmatrix}%
^T , A=%
\begin{pmatrix}
-\lambda_2 & \quad 0 & \quad 0 & \quad 0 & \quad 0 & \quad 0 & \quad 0 \\ 
0 & \quad -(\alpha_1+\alpha_2) & \quad 0 & \quad 0 & \quad 0 & \quad 0 & 
\quad 0 \\ 
0 & \quad \alpha_1 & \quad -(\beta_1+\beta_2) & \quad 0 & \quad 0 & \quad 0
& \quad 0 \\ 
0 & \quad \alpha_2 & \quad 0 & \quad 0 & \quad 0 & \quad 0 & \quad0 \\ 
0 & \quad 0 & \quad 0 & \quad \beta_1 & \quad 0 & \quad 0 & \quad 0 \\ 
0 & \quad0 & \quad 0 & \quad \beta_2 & \quad 0 & \quad 0 & \quad 0 \\ 
\lambda_2 & \quad 0 & \quad0 & \quad 0 & \quad 0 & \quad 0 & \quad 0 
\end{pmatrix}%
, &  \\ 
&  &  \\ 
& f(X)=%
\begin{pmatrix}
-\dfrac{\lambda_1}{N}SE-\dfrac{\lambda_2}{N}SI\quad \dfrac{\lambda_1}{N}SE-%
\dfrac{\lambda_2}{N}SI\quad 0\quad 0\quad 0 \quad 0\quad 0%
\end{pmatrix}%
^T, & 
\end{array}%
\end{equation*}
\newline
with the initial conditions satisfying the following inequalities 
\begin{equation}
0<S(0), E(0), I(0) \text{ and }0\leq E_r, Q(0),\text{ }D(0),\text{ }R(0).
\end{equation}

\subsubsection{The boundness and  positivity of the solution}

\begin{theorem}
\item Given the non-negative initial conditions (3.2), then the solutions
S(t), E(t), I(t), $E_r(t)$,R(t), D(t) and Q(t) are non-negative for all $t\geq
0 $ and bounded.
\end{theorem}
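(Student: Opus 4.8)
The plan is to treat positivity and boundedness separately: positivity via integrating-factor representations combined with a first-exit-time argument, and boundedness by exhibiting a conserved quantity. As a preliminary step I would invoke standard Cauchy--Lipschitz theory. Since the right-hand side of (2.2) is polynomial, hence locally Lipschitz, there is a unique maximal solution through the initial data (3.2) on some interval $[0,T_{\max})$; the boundedness obtained below will then force $T_{\max}=+\infty$, so it is legitimate to speak of ``the solution'' for all $t\ge 0$.

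For positivity I would introduce the first time at which non-negativity is about to fail, $t_1=\sup\{t\in[0,T_{\max}): S,E,I>0 \text{ and } E_r,R,D,Q\ge 0 \text{ on } [0,t]\}$, and derive a contradiction from $t_1<T_{\max}$. On $[0,t_1]$ every component is non-negative, so one can read off sign information equation by equation. The $S$-equation integrates explicitly to $S(t)=S(0)\exp\!\big(-\int_0^t(\tfrac{\lambda_1}{N}E+\tfrac{\lambda_2}{N}I+\lambda_3)\big)$, which stays strictly positive because $S(0)>0$. For $E$, the incoming term $\tfrac{\lambda_2}{N}SI$ is non-negative on $[0,t_1]$, so $\tfrac{dE}{dt}\ge\big(\tfrac{\lambda_1}{N}S-(\alpha_1+\alpha_2)\big)E$, and the integrating factor gives $E(t)>0$ since $E(0)>0$; likewise $\tfrac{dI}{dt}\ge-(\beta_1+\beta_2)I$ (using $\alpha_1 E\ge 0$) yields $I(t)>0$. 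Finally $E_r,R,D,Q$ have derivatives $\alpha_2 E,\ \beta_1 I,\ \beta_2 I,\ \lambda_3 S$, all non-negative on $[0,t_1]$, so these components are non-decreasing and remain $\ge$ their non-negative initial values. By continuity these strict and non-strict inequalities persist at $t_1$, contradicting the failure of non-negativity there; hence $t_1=T_{\max}$ and positivity holds on the entire interval of existence.

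For boundedness I would add all seven equations and observe that every coupling term cancels against its partner: $\pm\tfrac{\lambda_1}{N}SE$ and $\pm\tfrac{\lambda_2}{N}SI$ cancel between the $S$- and $E$-equations, $\pm\lambda_3 S$ cancels between the $S$- and $Q$-equations, the $E$-terms cancel as $-(\alpha_1+\alpha_2)E+\alpha_1 E+\alpha_2 E=0$, and the $I$-terms cancel as $-(\beta_1+\beta_2)I+\beta_1 I+\beta_2 I=0$. Writing $N(t)=S+E+I+E_r+R+D+Q$, this gives $\tfrac{dN}{dt}\equiv 0$, so $N(t)=N(0)$ is conserved. Since every component is non-negative by the previous step and their sum equals the constant $N(0)$, each of $S,E,I,E_r,R,D,Q$ is bounded above by $N(0)$ for all $t\ge 0$; the solution stays in the compact simplex $\{x\ge 0:\ \sum x_i=N(0)\}$, which simultaneously upgrades the local solution to a global one and closes the loop with $T_{\max}=+\infty$.

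I expect the only genuine subtlety to be the circular coupling in the positivity step — the sign of $E$ relies on $S$ and $I$, while that of $I$ relies on $E$ — which is exactly why the first-exit-time formulation is needed rather than a naive equation-by-equation integration, and why one must carefully track which components stay strictly positive ($S,E,I$) versus merely non-negative ($E_r,R,D,Q$). A minor point worth flagging is the reading of $N$ in the denominators: since $\tfrac{dN}{dt}=0$ holds whether $N$ is interpreted as the time-varying sum or as the fixed constant $N(0)$, the model is internally consistent and this ambiguity does not affect the argument.
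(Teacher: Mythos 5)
Your proposal is correct, and it diverges from the paper in an instructive way. For positivity, both you and the authors run a ``first time the sign condition fails'' contradiction argument, but the executions differ: the paper splits into five separate cases (S negative; S and E negative; I negative; E vanishing; E and I vanishing simultaneously), using a variation-of-constants formula for $E$ in the last two, whereas you organize everything into a single first-exit-time argument and get each component's sign from an integrating-factor representation ($S$ solves a linear equation in $S$, so $S(t)=S(0)\exp\bigl(-\int_0^t(\tfrac{\lambda_1}{N}E+\tfrac{\lambda_2}{N}I+\lambda_3)\,ds\bigr)>0$; the differential inequalities for $E$ and $I$ then propagate strict positivity, and monotonicity handles $E_r,R,D,Q$). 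This is the same underlying idea but cleaner and less exposed to the risk of a missed case. For boundedness the routes are genuinely different: the paper derives decay-type differential inequalities, e.g.\ $\tfrac{d}{dt}(S+E)\le -m_1(S+E)$ with $m_1=\min\{\lambda_3,\alpha_1+\alpha_2\}$, and integrates them compartment by compartment, while you simply sum all seven equations, observe that every flow term cancels against its partner, and conclude that the total population $S+E+I+E_r+R+D+Q$ is conserved, so each non-negative component is bounded by $N(0)$. Your conservation argument is sharper and structurally cleaner (it exhibits the invariant simplex, immediately yields global existence $T_{\max}=+\infty$, and avoids the small slips in the paper's inequalities, such as the misprinted $m_2=\min\{\lambda_3,\alpha_2,\beta_2+\beta_2\}$); what the paper's approach buys in exchange is quantitative decay information, notably $S(t)\le S(0)e^{-\lambda_3 t}$, which the authors reuse later when discussing the instability of the disease-free equilibrium and the effect of the quarantine rate, and which a pure conservation argument does not provide.
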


\begin{proof}\item

First, let us  prove that the solution of the system (3.1) is positive. \\

\textbf{$\quad  1^{st}$case:}\\
Suppose that $\exists t_1>0$ such that 
$$\begin{array}{lll}
S(t_1)<0,\\
S(t)>S(t_1); \forall t<t_1,\\
I(t), E(t)>0; \forall  t \leq t_1.
\end{array}$$ 
Then we have 
$$0\geq\dfrac{dS(t_1)}{dt}=-\dfrac{\lambda_1}{N}S(t_1)E(t_1)-\dfrac{\lambda_2}{N}S(t_1)I(t_1)-\lambda_3 S(t_1)> 0,\text{ absurd}.$$
\\

\textbf{$\quad 2^{nd}$case:}\\ 
Suppose that $\exists t_1>0$ such that 
$$\begin{array}{lll}
S(t_1),E(t_1)<0,\\
E(t)>E(t_1),S(t)>S(t_1); \forall t<t_1.
\end{array}$$ 
Then we have 
$$0\geq\dfrac{dE(t_1)}{dt}+\dfrac{dS(t_1)}{dt}=-\lambda_3 S(t_1)-(\alpha_1+\alpha_2)E(t_1)>0,\text{ absurd}.$$
From the second case;  $S$ and $E$ are not both negative as the same time.\\

\textbf{$\quad 3^{rd}$case:}\\
Suppose that $\exists t_1>0$ such that 
$$\begin{array}{lll}
I(t_1)<0,\\
I(t)>I(t_1); \forall t<t_1,\\
E(t)>0;\forall t \leq t_1.
\end{array}$$ 
Then we have 
$$0\geq \dfrac{dI(t_1)}{dt}=\alpha_1 E(t_1)-(\beta_1+\beta_2)I(t_1)>0,\text{ absurd}.$$\\

\textbf{$\quad 4^{th}$case:}\\
Suppose that $\exists t_1>0$ such that 
$$\begin{array}{lll}
E(t_1)=0 ,\\
E(t)>0; \forall t<t_1,\\
S(t),I(t)>0;\forall t\leq t_1.
\end{array}$$ 
We can write E as the following form
\begin{equation}
E(t)=e^{-t(\alpha_1+\alpha_2)}E(0)+\int^t_0 e^{-(t-s)(\alpha_1+\alpha_2)}\big(\dfrac{\lambda_1}{N}S(s)E(s)+\dfrac{\lambda_2}{N}S(s)I(s)\big)ds.
\end{equation}
So, we get 
$$0=E(t_1)=e^{-t_1(\alpha_1+\alpha_2)}E(0)+\int^{t_1}_0 e^{-(t_1-s)(\alpha_1+\alpha_2)}\big(\dfrac{\lambda_1}{N}S(s)E(s)+\dfrac{\lambda_2}{N}S(s)I(s)\big)ds$$
witch is impossible.\\

\textbf{$\quad 5^{th}$case:}\\
Suppose that $\exists t_1>0$ such that 
$$\begin{array}{lll}
E(t_1)=I(t_1)=0 ,\\
E(t),I(t)>0; \forall t<t_1,\\
S(t)>0;\forall t\leq t_1.
\end{array}$$ 
We can write E as the following form
\begin{equation}
E(t)=e^{-t(\alpha_1+\alpha_2)}E(0)+\int^t_0 e^{-(t-s)(\alpha_1+\alpha_2)}\big(\dfrac{\lambda_1}{N}S(s)E(s)+\dfrac{\lambda_2}{N}S(s)I(s)\big)ds.
\end{equation}
So, we get 
$$0=E(t_1)=e^{-t_1(\alpha_1+\alpha_2)}E(0)+\int^{t_1}_0 e^{-(t_1-s)(\alpha_1+\alpha_2)}\big(\dfrac{\lambda_1}{N}S(s)E(s)+\dfrac{\lambda_2}{N}S(s)I(s)\big)ds$$
which is impossible.

Then, we get E, I and S as positive functions. Therefore, we get $E_r$, R, D and Q as increasing  functions. Thus, we have $E_r$,R, D and Q as positive functions.\\
\text{                                                                                                                                                                          }\\

Now, we prove that the solution is bounded. From the positivity of the solution, we have the following inequality
$$\begin{array}{lll}
\dfrac{dS}{dt}\leq -\lambda_3 S(t),\\
\dfrac{dS}{dt}+\dfrac{dE}{dt}\leq - m_1  (S+E); \text{ where $m_1=min\{\lambda_3,\alpha_1+\alpha_2\}$}.
\end{array}$$

Then, by integrating $$\begin{array}{lll}
S(t)\leq S(0)e^{-\lambda_3 t},\\
E(t)+S(t)\leq (S(0)+E(0))e^{-m_1 t}.
\end{array}$$\\
We can deduce that $S(t)\leq S(0)$ and $E(t)\leq S(0)+E(0)$, for all t.\\

 By the same way, we get 
 $$\begin{array}{lll}
E(t)+S(t)+I(t)\leq (S(0)+E(0)+I(0))e^{-m_2 t};\text{ where $m_2=min\{\lambda_3,\alpha_2,\beta_2+\beta_2\}$}.
 \end{array}$$
Then, $I(t)\leq S(0)+E(0)+I(0)$ for $t \leq 0$.\\

For $E_r$,$R$, $D$ and $Q$, using the recent inequalities, we have 
$$\begin{array}{lll}
\dfrac{dR}{dt}&\leq \beta_1\big(E(t)+S(t)+I(t)\big)\\
&\leq (S(0)+E(0)+I(0))e^{-m_2 t}; \text{ where $m_2=min\{\lambda_3,\alpha_2,\beta_2+\beta_2\}$},\\
\dfrac{dD}{dt}&\leq\beta_2\big( E(t)+S(t)+I(t)\big)\\
&\leq (S(0)+E(0)+I(0))e^{-m_2 t},\\
\dfrac{dQ}{dt}&\leq \lambda_3 S(0)e^{-\lambda_3t}.
\end{array} $$
From integrating
$$\begin{array}{lll}
E_r(t)&\leq E_r(0)+\dfrac{\alpha_2}{m_1}(S(0)+E(0)),\\
R(t)&\leq R(0)+\beta_1\dfrac{S(0)+E(0)+I(0)}{m_2},\\
D(t)&\leq D(0)+\beta_2 \dfrac{S(0)+E(0)+I(0)}{m_2},\\
Q(t)&\leq Q(0)+S(0).
\end{array}$$
Thus, the proof is given.
 \end{proof}

\subsubsection{Existence and uniqueness of the solution}

\begin{theorem}
(Existence and Uniqueness)

\item Under the conditions the model (3.1) possesses a unique solution.
\end{theorem}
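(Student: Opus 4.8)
The plan is to recognize that the vector field on the right-hand side of (3.1), namely $F(X) := AX + f(X)$, is a polynomial map in the components of $X$, hence of class $C^1$ and in particular locally Lipschitz on $\mathbb{R}^7$. Once local Lipschitz continuity is established, the Picard--Lindel\"of (Cauchy--Lipschitz) theorem supplies a unique local solution through the initial data (3.2), and the a priori bounds already proved in Theorem 1 will be used to rule out finite-time blow-up and so promote the local solution to a global one on $[0,\infty)$.

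First I would fix a radius $r>0$ and work on the closed ball $\bar B(X_0,r)$ around the initial vector $X_0=(S(0),\dots,Q(0))^T$. On this compact set the linear part $AX$ is Lipschitz with constant $\|A\|$, while the nonlinear part $f$, whose only nontrivial entries are the bilinear expressions $\pm\frac{\lambda_1}{N}SE\pm\frac{\lambda_2}{N}SI$, satisfies an estimate
\begin{equation*}
\|f(X)-f(Y)\|\leq L_r\,\|X-Y\|,\qquad X,Y\in\bar B(X_0,r),
\end{equation*}
where $L_r$ depends on $r$, $\lambda_1$, $\lambda_2$ and $N$; this comes from the mean value inequality applied to the $C^1$ map $f$, or directly by adding and subtracting a mixed term in each product $SE$ and $SI$. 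Consequently $F$ is Lipschitz on $\bar B(X_0,r)$, and Picard--Lindel\"of yields a unique solution on some interval $[0,\tau)$.

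To obtain existence for all $t\geq 0$, I would invoke Theorem 1: every solution issued from data satisfying (3.2) stays non-negative and bounded, the explicit estimates ($S(t)\leq S(0)$, $E(t)\leq S(0)+E(0)$, and the analogous bounds for $I,E_r,R,D,Q$) confining the trajectory to a fixed compact, positively invariant region $\Omega\subset\mathbb{R}^7$. On $\Omega$ the Lipschitz constant of $F$ may be taken uniform, so the standard continuation argument forbids blow-up in finite time: if the maximal interval of existence were $[0,T)$ with $T<\infty$, boundedness would force a limit of the solution as $t\to T^-$, from which the local result could restart the trajectory past $T$, contradicting maximality. Hence the solution extends to $[0,\infty)$.

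The main obstacle is precisely that $f$ is not globally Lipschitz because of the quadratic coupling terms, so a one-line appeal to a global existence theorem is unavailable; the real content is marrying local Lipschitz continuity to the a priori boundedness of Theorem 1 to secure global existence. A secondary point to treat carefully is uniqueness across the whole half-line, which follows by applying Gronwall's inequality to the norm of the difference of two putative solutions on any compact subinterval, using the uniform Lipschitz bound on $\Omega$.
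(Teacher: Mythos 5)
Your proposal is correct and follows essentially the same route as the paper: both arguments control the quadratic terms $\frac{\lambda_1}{N}SE$ and $\frac{\lambda_2}{N}SI$ via the a priori bounds of Theorem 1 to obtain a Lipschitz estimate for $X\mapsto AX+f(X)$, and then invoke the Cauchy--Lipschitz (Picard--Lindel\"of) theorem. If anything, your version is more careful than the paper's: the paper states the Lipschitz bound ``from the bounded solution'' and concludes in one step, leaving implicit the local-existence-plus-continuation argument that you spell out to rule out finite-time blow-up and extend the solution to all of $[0,\infty)$.
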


\begin{proof}
Let us define the matrix norm  by $ |||.|||$ where for $A\in \R^7\times \R^7$:
$$ |||A|||=\rho(A) \text{  and $\rho(A)$  represents the largest eigenvalue of matrix A.} $$
Firstly, the function $X\in \R^7\mapsto  AX+f(X)$ is continuous. \\

Pose $X=(S,E,I,E_r,R,Q,D)$ and $Y=(S',E',I',E'_r,R',Q',D')$. From the bounded solution, we get
$$\begin{array}{lll}
||AX-AY+f(X)-f(Y)||_{\R^7}&\leq |||A||| \text{ }||X-Y||_{\R^7}+\dfrac{\lambda_1}{N}|S||E-E'|+\dfrac{\lambda_1}{N}|E'||S-S'|\\
&+\dfrac{\lambda_2}{N}|S||I-I'|+\dfrac{\lambda_2}{N}|I'||S-S'|\\
&\leq max\{\lambda_2,\alpha_1+\alpha_2,\beta_1+\beta_2\}||X-Y||_{\R^7}+\dfrac{\lambda_1+\lambda_2}{N} max\{|S|,|E'|,|I'|\}||X-Y||_{\R^7}\\
&\leq max\big\{\dfrac{\lambda_1+\lambda_2}{N} max\{|S|,|E'|,|I'|\},\alpha_1+\alpha_2,\beta_1+\beta_2\big\}||X-Y||_{\R^7}.
\end{array}$$
Then, the  Cauchy-Lipschitz is satisfied. And consquently, the model (3.1) possesses a unique solution \cite{Co}.
\end{proof}

\subsection{Disease free equilibrium, Reproduction number $R_0$ and extinction of
infected population}

At the disease-free state, there is no disease in the human population which
implies $E=I=R=D=Q=E_r=0$. Thus, the disease-free equilibrium of the model
(3.1) is given by 
\begin{equation*}
(S^0,E^0,I^0,E_r^0,R^0,D^0,Q^0)=(N,0,0,0,0, 0,0).
\end{equation*}

We can also verify whether or not the state-point is stable. In our case,
according the positivity of our model, we have from the first equation of
system (2.2) 
\begin{equation*}
S(t)\leq S(0) e^{-\lambda_3 t} \underset{t\rightarrow +\infty}{%
\longrightarrow }0 \quad (\neq N).
\end{equation*}
As a result, the state-point is unstable.\newline

Now, let us calculate the basic reproduction number $R_0$ which is the
average number of secondary infections caused by an infectious individual
during his or her entire period of infectiousness (Diekmann et al) \cite{Di}%
. The basic reproduction number is an important non-dimensional quantity in
epidemiology as it sets the threshold in the study of a disease both for
predicting its outbreak and for evaluating its control strategies. \newline
Using the next generation operator approach by van den Driessche and
Watmough \cite{Va}, we have

\begin{equation*}
R_0= \dfrac{\lambda_1(\beta_1+\beta_2)+\alpha_1\lambda_2}{%
(\alpha_1+\alpha_2)(\beta_1+\beta_2)}.
\end{equation*}
And we have the corresponding effective control reproduction number 
defined as in \cite{Di} 
\begin{equation}
R_0(t)= \dfrac{\lambda_1(\beta_1+\beta_2)+\alpha_1\lambda_2}{%
N(\alpha_1+\alpha_2)(\beta_1+\beta_2)} S(t).
\end{equation}
This quantity provides us with a clear index to evaluate the control strategy for any time t.

\begin{theorem}
If $R_0 < 1$ , then exposed and infected population will extinct.
\end{theorem}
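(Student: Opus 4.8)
The plan is to reduce the problem to the two-dimensional subsystem governing $E$ and $I$ and to exhibit a linear Lyapunov function whose decay is guaranteed precisely by the hypothesis $R_0<1$. First I would record a conservation law: summing the seven equations of (2.2) gives $\frac{dN}{dt}=0$, so $N$ is constant, and by the positivity established in Theorem~1 every compartment is non-negative; in particular $S(t)\le N$ for all $t\ge 0$. This is the only structural fact about $S$ that the argument needs, and it converts the bilinear incidence terms into linear upper bounds.

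Next I would restrict attention to
$$\frac{dE}{dt}=\frac{\lambda_1}{N}SE+\frac{\lambda_2}{N}SI-(\alpha_1+\alpha_2)E,\qquad \frac{dI}{dt}=\alpha_1 E-(\beta_1+\beta_2)I,$$
and test the candidate $V(t)=aE(t)+bI(t)$ with constants $a,b>0$ to be chosen. Differentiating along solutions and inserting $S\le N$ yields
$$\frac{dV}{dt}\le\big(a\lambda_1-a(\alpha_1+\alpha_2)+b\alpha_1\big)E+\big(a\lambda_2-b(\beta_1+\beta_2)\big)I.$$
The strategy is to choose $a,b>0$ making both bracketed coefficients strictly negative; then $\frac{dV}{dt}\le -cV$ for some $c>0$, forcing $V\to 0$ and hence $E,I\to 0$ exponentially, after which $\frac{dE_r}{dt}$, $\frac{dR}{dt}$ and $\frac{dD}{dt}$ are driven to zero as well.

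The heart of the argument, and the step I expect to be the main obstacle, is verifying that such positive weights exist exactly when $R_0<1$. Negativity of the two coefficients requires $\frac{a\lambda_2}{\beta_1+\beta_2}\le b$ and $b\le \frac{a(\alpha_1+\alpha_2-\lambda_1)}{\alpha_1}$, which admits a positive $b$ if and only if $\frac{\lambda_2}{\beta_1+\beta_2}\le\frac{\alpha_1+\alpha_2-\lambda_1}{\alpha_1}$. I would then show that this chain rearranges to $\lambda_1(\beta_1+\beta_2)+\alpha_1\lambda_2\le(\alpha_1+\alpha_2)(\beta_1+\beta_2)$, that is, precisely $R_0\le 1$; under the strict hypothesis $R_0<1$ the inequalities become strict, so genuinely negative coefficients are available, and $\alpha_1+\alpha_2-\lambda_1>0$ (needed for positivity of $b$) follows automatically. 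An alternative I would keep in reserve is the comparison/next-generation route: writing the subsystem as $x'=(F(S)-V)x$ with $F(S)\le F(N)=F$, one gets $0\le x(t)\le y(t)$ where $y'=(F-V)y$, and the standard equivalence $s(F-V)<0\iff \rho(FV^{-1})=R_0<1$ gives $y(t)\to 0$. Both routes hinge on the same algebraic identity, so the real content is the bookkeeping that ties the sign condition back to $R_0$.
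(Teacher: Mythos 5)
Your proof is correct, and it follows the same basic route as the paper --- a linear Lyapunov functional in $(E,I)$ combined with the bound $S\le N$ --- but your execution differs in a way that matters. The paper fixes the weights once and for all, taking $V=\lambda_2 I+(\beta_1+\beta_2)E$; with this choice the $I$-terms cancel exactly and one gets $\dot V\le\big(\alpha_1\lambda_2+\lambda_1(\beta_1+\beta_2)-(\alpha_1+\alpha_2)(\beta_1+\beta_2)\big)E$. To close the argument the paper then replaces $E$ by $V$ on the right-hand side; since the bracket is negative when $R_0<1$, that replacement requires $E\ge V=\lambda_2 I+(\beta_1+\beta_2)E$, which is false in general (it fails whenever $\lambda_2 I>(1-\beta_1-\beta_2)E$). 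So the paper's final inequality is a genuine gap: with its weights one only obtains $\dot V\le 0$ together with integrability of $E$, and extinction then needs an extra argument (Barbalat/LaSalle, or decay of $I$ via the $I$-equation once $E\to 0$). Your version avoids this entirely: by leaving the weights $a,b$ free and choosing them strictly inside the cone $\frac{a\lambda_2}{\beta_1+\beta_2}<b<\frac{a(\alpha_1+\alpha_2-\lambda_1)}{\alpha_1}$ --- which is nonempty exactly when $R_0<1$, as your cross-multiplication shows, and which automatically forces $\alpha_1+\alpha_2>\lambda_1$ --- you make both coefficients strictly negative and obtain an honest differential inequality $\dot V\le -cV$, hence exponential decay of both $E$ and $I$. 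In short, you take the same Lyapunov approach, but your perturbed choice of weights repairs the one step of the paper's proof that does not hold as written: the paper's specific weights buy a clean cancellation, while your open choice buys a complete proof.
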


\begin{proof} Let us define the Lyapunov functional
$$V(t) = \lambda_2 I(t) +(\beta_1+\beta_2) E(t).$$
The derivative of V is given by 
$$\begin{array}{lll}
\overset{\cdot}{V}(t)&=\lambda_2 \overset{\cdot}{I}(t) +(\beta_1+\beta_2) \overset{\cdot}{E}(t),\\
&=\bigg(\alpha_1\lambda_2+\lambda_1(\beta_1+\beta_2)-(\alpha_1+\alpha_2)(\beta_1+\beta_2)\bigg)E(t),\\
&\leq \bigg(\alpha_1\lambda_2+\lambda_1(\beta_1+\beta_2)-(\alpha_1+\alpha_2)(\beta_1+\beta_2)\bigg)V(t).
\end{array}$$
Thus, 
$$V(t)\leq V(0) exp\bigg(\bigg(\alpha_1\lambda_2+\lambda_1(\beta_1+\beta_2)-(\alpha_1+\alpha_2)(\beta_1+\beta_2)\bigg)t\bigg),$$
where $V(0)= \lambda_2 I(0) +(\beta_1+\beta_2) E(0).$
Therefore, for $R_0 < 1$ we have $ \alpha_1\lambda_2+\lambda_1(\beta_1+\beta_2)-(\alpha_1+\alpha_2)(\beta_1+\beta_2)< 0$. Then
$$V (t)\underset{ t\rightarrow +\infty }{\longrightarrow}0 \text{ i.e } \underset{ t\rightarrow +\infty }{lim}I(t)=0=\underset{ t\rightarrow +\infty }{lim}E(t).$$
 This shows that the disease will be extinct if $R_0 < 1$.

\end{proof}

\subsection{The effect of quarantine strategies}

\quad As  most countries of the world, Tunisia fights vigorously against the spread of
the COVID-19 epidemic with some disabilities such as the lack of medical
equipment. Therefore, in order to brake the spread of this emerging epidemic the
general quarantine is adopted. In addition, since the number of beds in
intensive care in the hospital is reduced and insufficient, the infected
population is not fully hospitalized except in cases with fairly severe
symptoms. Besides, the rest of the population should respect total isolation. The
natural question here is : What is the impact of this decision in other
words this quarantine strategy?

\text{                                                                                                                                               }\\

By equation one in the system (2.2), the susceptible population can be evaluate
from the following expression \\
\begin{equation}
S(t)=e^{-\lambda _{3}t}S(0)-e^{-\lambda _{3}t}\int_{0}^{t}e^{\lambda _{3}s}%
\big[\dfrac{\lambda _{1}}{N}S(s)E(s)+\dfrac{\lambda _{2}}{N}S(s)I(s)\big]ds.
\end{equation}\\

It is clear that the parameter $\lambda _{3}$ reduces the number of people
exposed to infection with the disease. From the evolution of exponential
function, we can deduce that when the rate $\lambda_3$  is getting bigger, the susceptible population is getting smaller.

When the $\lambda _{3}=0$,  we have this diagram \\
\begin{figure}[hbtp]
\centering
\includegraphics[scale=0.6]{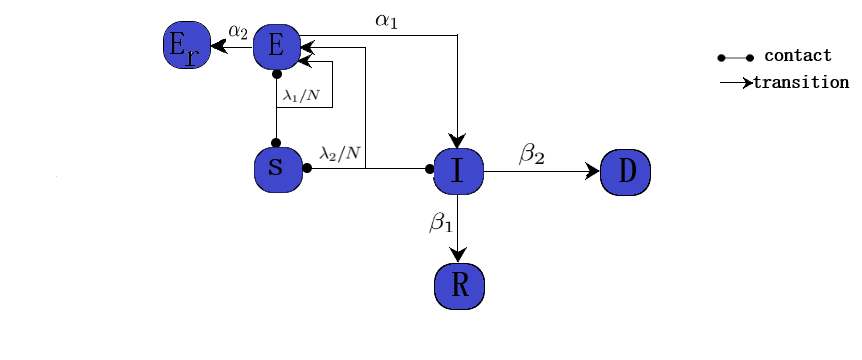}
\end{figure}\\
and the corresponding non-linear differential equations
\begin{equation}
\left\{ 
\begin{array}{lll}
\dfrac{dS}{dt} & =-\dfrac{\lambda_1}{N} S E-\dfrac{\lambda_2}{N} SI, &  \\ 
\\
\dfrac{dE}{dt} & = \dfrac{\lambda_1}{N} S E+\dfrac{\lambda_2}{N} S
I-(\alpha_1+\alpha_2) E, &  \\ 
\\
\dfrac{dI}{dt} & = \alpha_1 E-(\beta_1 +\beta_2)I, &  \\ 
\\
\dfrac{dE_r}{dt} & = \alpha_2 E, &  \\ 
\\
\dfrac{dR}{dt} & =\beta_1 I, &  \\ 
\\
\dfrac{dQ}{dt} & =\lambda_3 S, & 
\end{array}
\right.
\end{equation}
where the parameters were defined in the table 1.\\
Clearly, by the first equation, one has
\begin{equation}
S(t)=S(0)-\int_{0}^{t}\dfrac{\lambda _{1}}{N}S(s)E(s)+\dfrac{\lambda _{2}}{N}%
S(s)I(s)ds.
\end{equation}\\
Hence,\\
\begin{equation}
\int_{0}^{t}\dfrac{\lambda _{1}}{N}S(s)E(s)+\dfrac{\lambda _{2}}{N}%
S(s)I(s)ds=S(0)-S(t).
\end{equation}\\
So the quantity\\
$$
\int_{0}^{t}\dfrac{\lambda _{1}}{N}S(s)E(s)+\dfrac{\lambda _{2}}{N}%
S(s)I(s)ds=0
$$
if and only 
$$S(0)=S(t)\text{ for a certain }t,$$
which implies that  the disease almost infected every susceptible person.%
 \text{                                                                                                                                        }\\
 
This theoretical result which justifies the quarantine strategy will be
consolidated by a numerical study in the following paragraph.

\section{Numerical Results and Discussion}
\label{sec:3}
Using data of Tunisia from March $14^{th}$ to April $8^{th}$ 2020, we can estimate the parameters'values. The results can be summarized as follows:\\

\begin{table}[hbtp]
\centering
\begin{tabular}{|p{3cm}p{4cm}|}
\hline
parameters & values \\ \hline
$\lambda_1$ & 0.8 $(day^{=1})$ \\ \hline
$\lambda_2$ & 0.02 $(day^{=1})$ \\ \hline
$\lambda_3$ & 0.166 $(day^{=1})$ \\ \hline
$N$ & $11*10^6$ persons \\ \hline
$\alpha_1$ & 0.0109 $(day^{=1})$ \\ \hline
$\alpha_2$ & 0.1$(day^{=1})$ \\ \hline
$\beta_1$ & 0.003 $(day^{=1})$ \\ \hline
$\beta_2$ & 0.0037 $(day^{=1})$ \\ \hline
$R_0$ & 0.75 \\ \hline
\end{tabular}%
\caption{SEIRDQ Model parameters}
\end{table}
And for stimulation, we chose the following initial condition: S(0) =
N-E(0)-I(0), E(0) =200, I(0) =18, $E_r(0)= R(0) = D(0) = 0$.
\text{                                                                                                   }\\

In addition to the  population plots in figures 2 and 3, we collected some
meaningful quantitative information about the model parameters (table 1) and
the peak values for infected and exposed populations (table 3).\newline
The results can be summarized as follows: \\
\begin{table}[hbtp]
\centering
\begin{tabular}{|p{2.8cm}|p{3cm}|}
\hline
Peak infected & Peak exposed \\ 
day \text{ }Number(\%) & day \text{ } Number(\%) \\ \hline
38 \text{ } 648($5,9.10^{-3}$) & 12 \text{ } 3405($3,1.10^{-2}$) \\ \hline
\end{tabular}%
\caption{Infected and exposed peak values in Tunis region.}
\end{table}\\  

And we have $R_0<1$, therefore the disease will disappear.
\newpage
\begin{figure}[hbtp]
\centering
\includegraphics[scale=0.4]{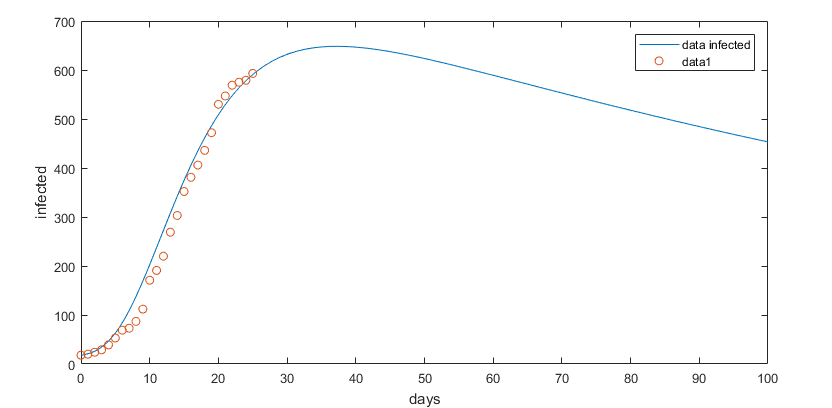}
  \caption{Infected population}
\end{figure}
\begin{figure}[hbtp]
\centering
\includegraphics[scale=0.42]{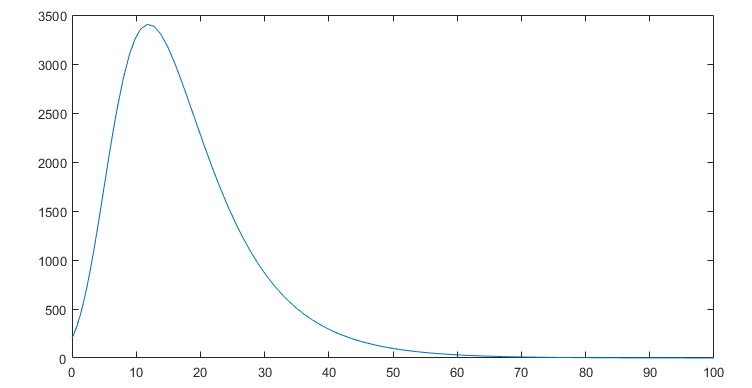}
  \caption{Exposed population}
\end{figure}
\begin{figure}[hbtp]
\centering
\includegraphics[scale=0.42]{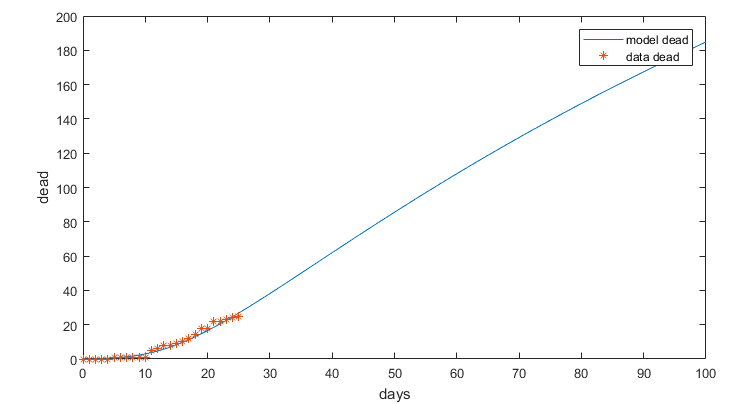}
\caption{dead population}
\end{figure}

The human body needs around 10 days to get rid of the organism. However, cases where the body needed 21 days to get rid of symptoms are detected. So after the peak, it is logical that the rate of recovered population increases. Then we can get the following figures.\\ 
\text{                                                                                                                  }\\

From figures 2 and 5, if the Tunisian people maintain this
pattern. After 50 days, the exposed population is fewer and the infected
population surpasses the peak and decreases quickly. So, the government of Tunisia can ease up the measures that it took in connection with the outbreak of the epidemic like the general quarantine.
\newpage

\begin{figure}[hbtp]
\centering
  \includegraphics[scale=0.44]{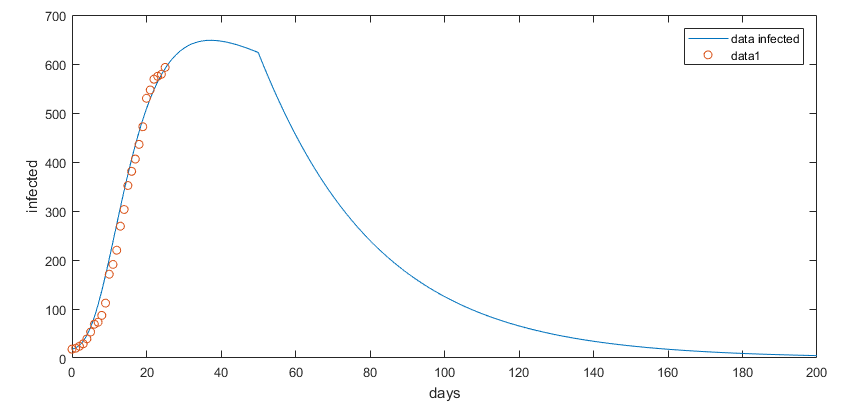}
  \caption{Infected population}
  \end{figure}
\begin{figure}[hbtp]
    \centering
   \includegraphics[scale=0.47]{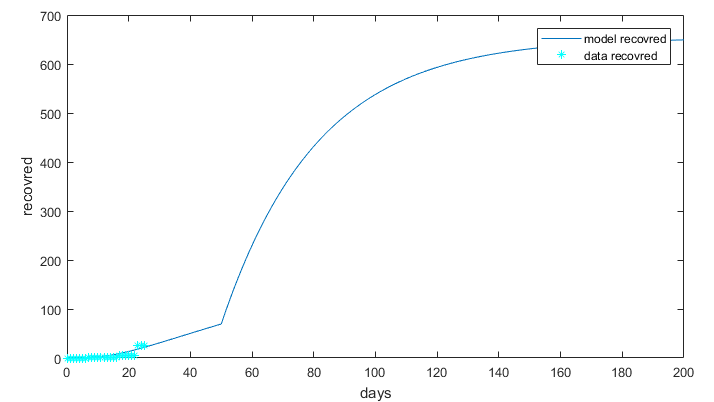}
   \caption{Recovered population}
    \end{figure}
\begin{figure}[hbtp]
    \centering
   \includegraphics[scale=0.42]{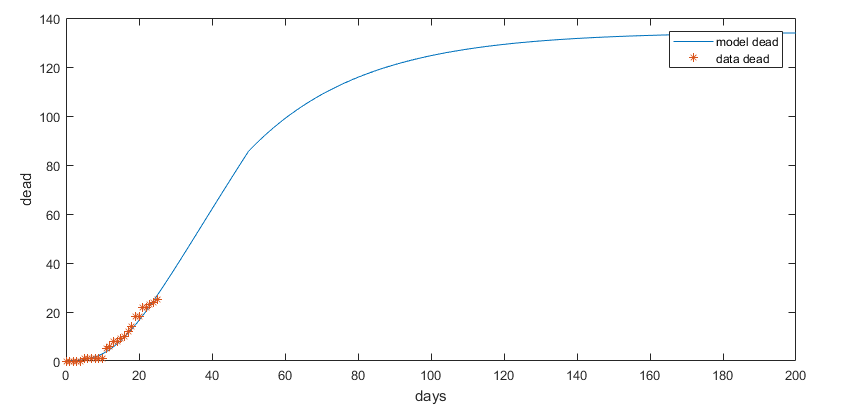}
\caption{Dead population}
\end{figure}

\text{                                                                                                  }\\
To make a better illustration of quarantine strategy, we tested different home quarantine rates ($\lambda_3$)
in figures (8.a-h).

\begin{figure}[hptb]
\begin{subfigure}[c]{9cm}
    \centering
    \includegraphics[scale=0.4]{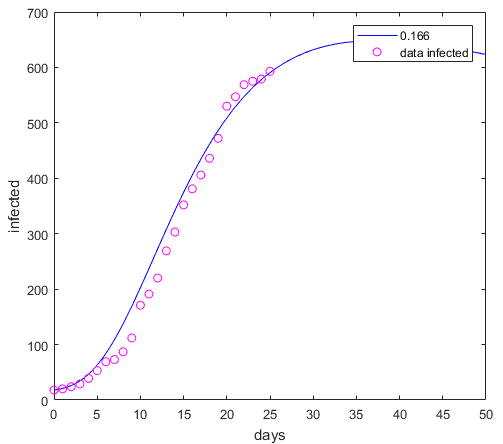}
   \caption{}
   \end{subfigure}\hfill 
\begin{subfigure}[c]{8cm}
    \centering
   \includegraphics[scale=0.4]{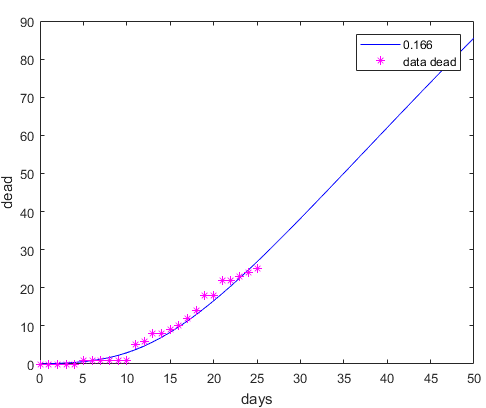}
   \caption{}
    \end{subfigure}\hfill
    
    \begin{subfigure}[c]{9cm}
    \centering
    \includegraphics[scale=0.4]{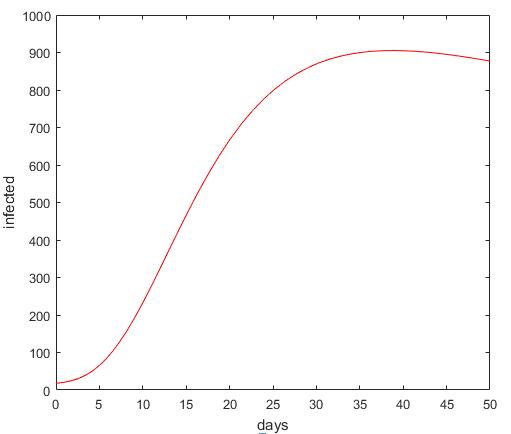}
   \caption{}
   \end{subfigure}\hfill 
\begin{subfigure}[c]{8cm}
    \centering
   \includegraphics[scale=0.4]{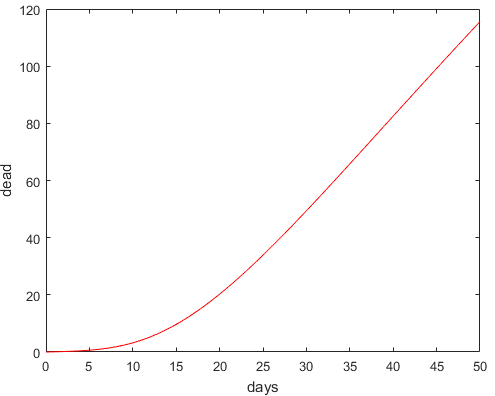}
   \caption{}
    \end{subfigure}\hfill
    \begin{subfigure}[c]{9cm}
    \centering
    \includegraphics[scale=0.4]{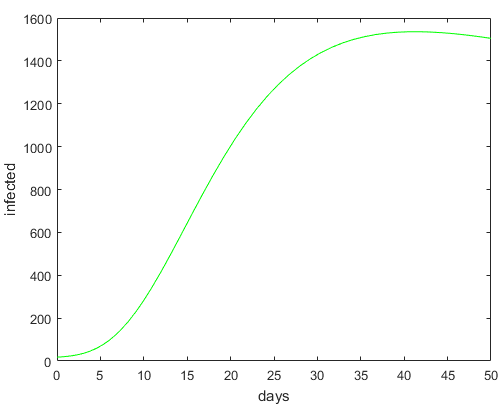}
   \caption{}
   \end{subfigure}\hfill 
\begin{subfigure}[c]{8cm}
    \centering
   \includegraphics[scale=0.4]{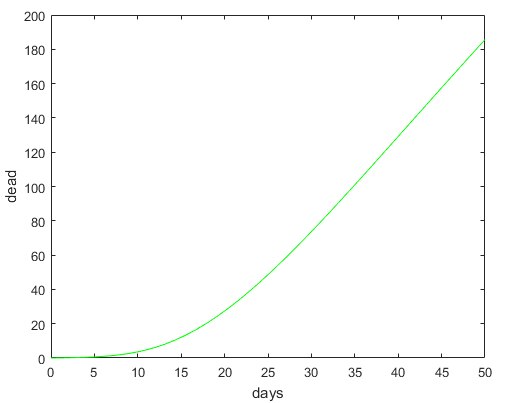}
   \caption{}
    \end{subfigure}\hfill
    
    \begin{subfigure}[c]{9cm}
    \centering
    \includegraphics[scale=0.4]{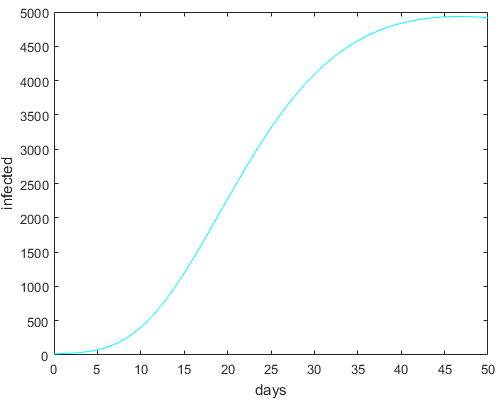}
   \caption{}
   \end{subfigure}\hfill 
\begin{subfigure}[c]{8cm}
    \centering
   \includegraphics[scale=0.4]{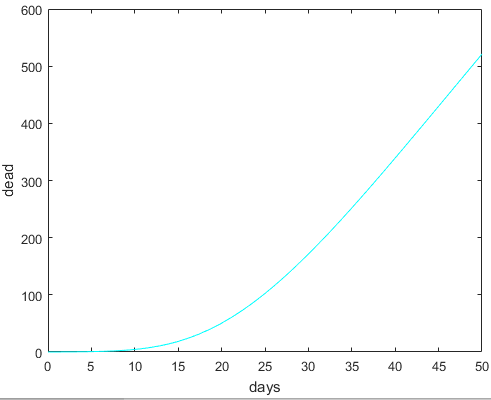}
   \caption{}
    \end{subfigure}
\caption{a-b) Infected population and  Dead population with $%
\protect\lambda_3=0.166$ which was our case, c-d) Infected population and  Dead population with $%
\protect\lambda_3=0.15$, e-f) Infected population and  Dead population with $%
\protect\lambda_3=0.13$, g-h) Infected population and  Dead population with $%
\protect\lambda_3=0.1$}
\end{figure}
\newpage

Now, let’s summarize by giving different cases in one figure in order to compare them.

\begin{figure}[hptb]
\begin{subfigure}[c]{9cm}
    \centering
    \includegraphics[scale=0.4]{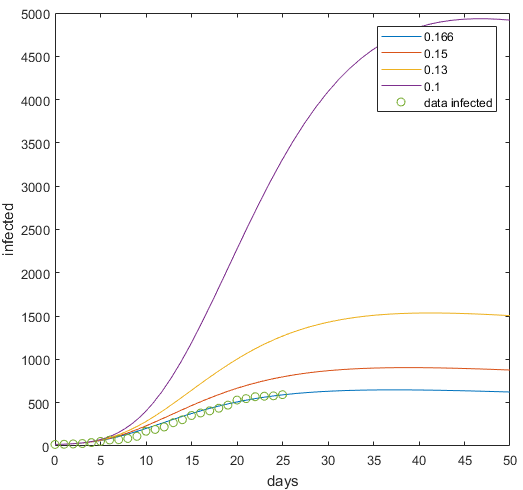}
   \caption{}
   \end{subfigure}\hfill 
\begin{subfigure}[c]{8cm}
    \centering
   \includegraphics[scale=0.35]{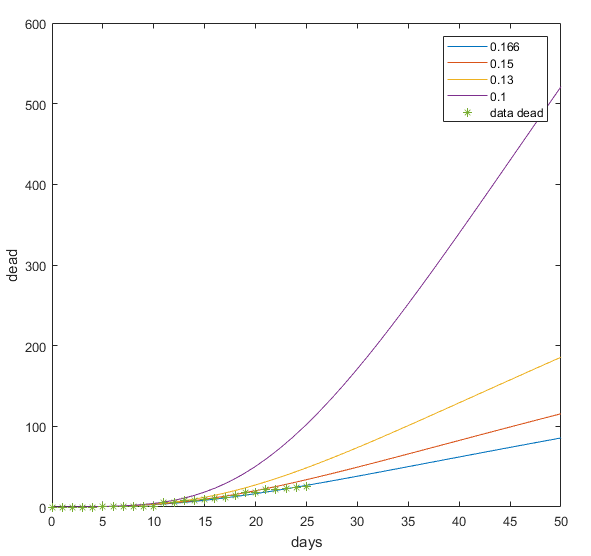}
   \caption{}
    \end{subfigure}
\caption{a) Infected population and b) Dead population with different $%
\protect\lambda_3=0.166,0.15,0.13,0.1$ }
\end{figure}

The change in the $\lambda_3$ parameter generates a consequent variation in the rate of the infected population and the deceased persons (fig.  8.a-h).

The most spectacular is the case $\lambda_3 = 0$ which corresponds to Q = 0; or in other words, in the absence of a quarantine strategy. We can see that on the 20th day of the epidemic we get $3,75$x$10^5$ infected people (fig. 9.a).
This number is far above the capacity of health structures in Tunisia.
Consequently, the choice of containment strategy has significantly reduced the number of deceased and infected persons. 
\begin{figure}[tbph]
\begin{subfigure}[c]{9cm}
    \centering
    \includegraphics[scale=0.4]{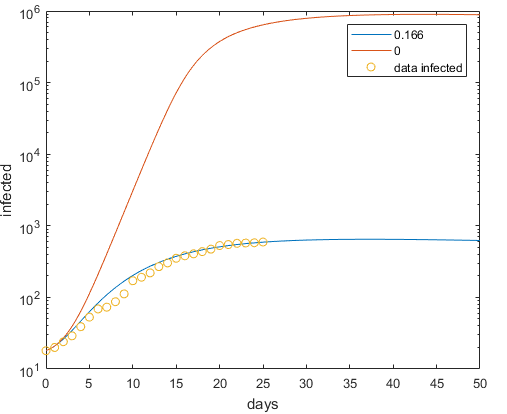}
   \caption{}
   \end{subfigure}\hfill 
\begin{subfigure}[c]{8cm}
    \centering
   \includegraphics[scale=0.4]{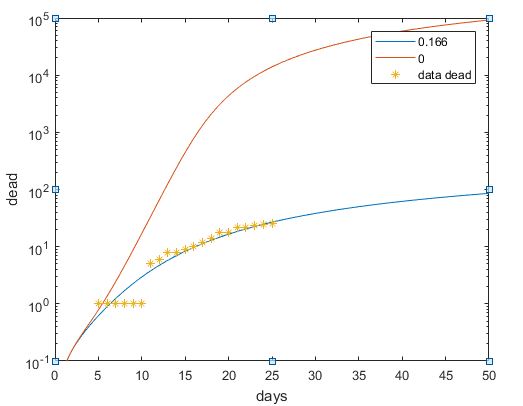}
   \caption{}
    \end{subfigure}
\caption{a) Infected population and b) Dead population with different $%
\protect\lambda _{3}=0.166,0$ }
\end{figure}

\section{Conclusion}
\label{sec:4}
\quad In this article, we presented a new model based on nonlinear differential equations allowing to model this COVID-19 in Tunisia. The aim of our work, at first, is to provide initial ideas and guidelines for a quantitative and qualitative study of our considered model. In particular, the positivity, boundness and the existence of a solution are established. The impact of the quarantine strategy is also established by two methods. We hope our work motivates new research to produce more elaborate and precise methods related to this model or to give significant improvements..

\end{document}